\newtheorem{definition}{Definition}
\newtheorem{theorem}{Theorem}
\newtheorem{lemma}{Lemma}
\newtheorem*{theorem*}{Theorem}
\title{Quantum Algorithms for the Minimum Steiner Tree problem with application to Binary Near-Perfect Phylogenies}
\author{Lingfa Meng$^1$, David Salvador Novo$^2$, Albert H. Werner$^{*3}$, Samir Bhatt$^{*4}$}
\date{$^1$Department of Public Health, University of Copenhagen\\[2ex] $^2$Department of Mathematics, University of Copenhagen\\[2ex] $^3$QMATH, University of Copenhagen\\[2ex] $^4$Department of Public Health, University of Copenhagen\\[2ex]$^*$ joint senior authors, \\[2ex]%
\today}
\begin{document}

\pagenumbering{arabic}

\maketitle
\section{Introduction}

Despite decades of efforts into quantum algorithms, there has been a lack of a concrete algorithm with an end-to-end analysis for a problem of practical interest. We first present a quantum algorithm in bioinformatics for solving the Binary Near-Perfect Phylogeny Problem (BNPP) with a complexity bound of $O(8.926^q + 8^qnm^2)$, where n is the number of input taxa and m is the sequence length for each taxon with each character in the sequence being a binary bit using the QRAM model. We give another polynomial space exact algorithm for the Minimum Steiner Tree (MST) problem with complexity \( O^*(e^{(1+{g(k,l)})k})\) in the circuit model.

Section 1 is the introduction. Section 2 defines BNPP and reviews the current best classical algorithms for exact BNPP. Section 3 presents a quantum algorithm for BNPP under the QRAM model. Section 4 presents a polynomial space quantum algorithm for MST under the circuit model. Section 5 is the conclusion and future directions.

\section{Binary Near-Perfect Phylogeny}

Phylogenetic trees serve as foundational structures within evolutionary biology, enabling the representation of shared histories arising from the evolutionary process. Modern computational phylogenetics seeks to infer a most probable binary (rooted or unrooted) tree from genetic data comprised of a set of characters (protein, codon or nucleotide). A popular paradigm for inference relies on an optimality criterion where the most likely phylogenetic tree minimises the total number of character-state changes\cite{Fitch1967-ms,Fitch1971-tz} - termed maximum parsimony or the method of minimum evolution\cite{Edwards1996-lb}. While initially a popular approach for phylogenetic analysis, parsimony methods have largely been superseded by likelihood based approaches which have shown to have a better performance across a broad range of evolutionary scenarios\cite{Steel2000-rp}. However, there are still cases where the non-parametric approach of maximum parsimony is preferable or competitive\cite{Kolaczkowski2004-bw}.

A facet of parsimony that has been studied in depth is where the genetic sequence is assumed to be infinite sites in length such that no site mutates more than once across the entire tree - as known as an absence of homoplasy. A tree which adheres to no homoplasy is known as a perfect phylogeny. Clearly, the absence of homoplasy assumption is both highly restrictive and implausible even for very long sequences with billions of characters. Perfect phylogenies are of practical interest in solving the haplotyping problem\cite{Gusfield2002-im} where the challenge is to distinguish between haplotypes given genetic data, in trait association\cite{Mailund2006-ua}.

The computational bottleneck is identifying the perfect phylogeny (PP) or near-perfect phylogeny (NPP) - where all characters evolve at sites over the tree with no homoplasy or minimum homoplasy. The PP framework is used for tree reconstruction in cancer cell sequencing \cite{Jahn_2016, Malikic:2019aa} and haplotyping  \cite{bafna_haplotyping_2003}. NPP extends PP for input data not allowing a PP. A recent study \cite{10.1093/sysbio/syab069} has identified viral epidemics to be a good setting for NPP application where the false positive rate of obtaining splits in the reconstructed tree \(\widehat{T}\) compared to the true tree \(T\) is no higher than that of Maximum Likelihood.

\subsection{PP and BNPP}
The definition for a phylogeny is given as follows.

\begin{definition}[Phylogeny \cite{sridhar_algorithms_2007} ]\label{def:phylo}

A tree $T=(V, E)$ such that

1. The hamming distance between any two connected nodes is 1.

2. Every input taxon appears in the tree and all nodes with degree no bigger than 2 is an input taxon.

\end{definition}

This definition for a phylogeny \ref{def:phylo} differs from another common definition, where the set of leaf nodes are identified with the set of input taxa. The flexibility for any taxa to be an internal node of degree-2 accommodates for shallow time phylogenies, where BNPP shows a comparable accuracy to likelihood-based methods \cite{10.1093/sysbio/syab069}. We note that many likelihood methods restrict input taxon to be non-internal nodes. We then define the perfect phylogeny problem.

\begin{definition}[Perfect Phylogeny]\label{prob:pp}
    
Given an n by m matrix M with integers from the set 
\begin{equation*}
K = \{ { 1,2,...,k} \}
\end{equation*}
 where each row of M is referred to as a taxon, each column c of M is referred to as a character, and each value in a cell in column c is referred to as a state of character c, reconstruct an undirected tree $T$ where each leaf is labelled by a distinct taxon in M and each internal node by a vector in $K^m$, such that all nodes identified by any state $i$ of a character $c$ form a connected subtree.

\end{definition}

Note that every leaf in $T$ will be labelled by a row vector in $M$, which is a direct corollary of the connected subtree constraint. This problem \ref{prob:pp} has been proven to be NP-complete \cite{goos_two_1992}, and fixed-parameter tractable (FPT). For $k=2$, we obtain the special case of a Binary Perfect Phylogeny (BPP). We next define BNPP which is a more general case of PP. 

 Previous work identifies polynomial bounds in $n,m$ given a fixed $k$ for PP. Especially for the $k=3$ case \cite{DRESS19923} have given a polynomial bound. The definition for the BNPP problem is taken from \cite{fernandez-baca_polynomial-time_2003}.

\begin{definition}[Binary Near-Perfect Phylogeny]
\label{prob:binary_near_perfect_phylogeny}
Let $l$ be a labelling function from a taxon (vertex) to a boolean string. L denotes the set of all possible taxa with $m$ characters.
$$\forall \ v \in L, \ \exists \ l(v) \in \left\{ 0,1\right\}^ {m}$$
We say that there exists a q-near-perfect phylogeny when there exists a phylogenetic tree $T$ that attains q for the following penalty function.
$$\textrm{penalty} \ (T) = \sum_{(u,v)\in E(T)} d(l(u), l(v)) - m$$
where $d$ is the hamming distance.
\end{definition}

Intuitively, q could be the number of times that the system displays a homoplasy, which is the independent evolutions of two different taxa on the same character leading to the same state. Purely for the convenience of definition, we shall assume that at each time step only one mutation happens and this mutation exhibits a homoplasy if the result of the mutation leads to a character state that is already present from previous mutations. This penalty function also accounts for the number of back mutations in a tree, however a solution with a back mutation whose removal only reduces the penalty by 1 is never accepted as it does not violate any four-gamete conditions.

The penalty q is the critical term governing the complexity of BNPP. q is linearly related to the parameter k - number of vertices to be connected in MST.

\subsection{Classical Algorithm for BNPP}

We review the second algorithm given in \cite{sridhar_algorithms_2007}, which was the most efficient fixed-parameter tractable algorithm for BNPP, with a complexity of $O(21 ^ q + 8 ^ q n m ^ 2)$, where q is the value of the penalty function for the output tree, n is the number of input taxon and m is the number of sites. We give the definition of a conflict graph whose cardinality dominates the size of MST instances.

\begin{definition}[Conflict Graph]
\label{def:conflict_graph}
For \[I \in M_{n \times m}, \ \forall \ i\  j,\ I_{ij} \in \{ 0,1\}\]
let the set of columns be the set of graph vertices and two vertices be connected if and only their corresponding columns satisfy the four-gamete conditions. The four-gamete condition states that two colomns exhibit all four possible pairwise combinations.
\[\{ 0,0\}, \{ 1,0\}, \{ 0,1\}, \{ 1,1\}\]
The conflict graph is the union of the non-isolated components of the above graph.
\end{definition}
We reproduce the following pseudocode \ref{alg:classical_bnpp} from \cite{sridhar_algorithms_2007}, which has a complexity of $O(21^q + 8^q nm^2)$ in the derandomized case.

\begin{algorithm}
\caption{BuildNPP}\label{alg:classical_bnpp}

\KwData{$I \in M_{n \times m}, \ \forall \ i\  j,\ I_{ij} \in \{ 0,1\}$}
\KwResult{$E \cup ( \cup_i T_i)$}
$L := \{ I \}$\;
$E:=\emptyset $\;
1) \While{$| \cup_{M_i \in L} N(M_i)  | > q$}{
	a. guess vertex $v$ from $\cup _ {M_i \in L } N(M_i)$, let $v \in N(M_j)$, with probability at least $4^{-q}$ that all guesses succeed\;
	b. let $M0 := M_j (c(v), 0)$ and $M1 := M_j (c(v), 1)$\;
	c. guess taxa $r$ and $p$, with probability at least $2^{-q}$ that all guesses succeed\;
	d. add $r$ to $M1$, $p$ to $M0$ and $(r, p)$ to $E$\;
	e. remove $M_j$ from $L$, add $M0$ and $M1$ to $L$\;
}
2) for each $M_i \in L$ compute an optimum phylogeny $T_i$\;

\end{algorithm}

The algorithm \ref{alg:classical_bnpp} is divided into 2 parts:

1) Divide-and-conquer: Row-split the input taxa matrix $M$ into a set of matrices $\{M_i\}$ to reduce the number of non-isolated vertices in the conflict graph for each $M_i$.

2) Dynamic programming: Obtain a MST ${T_i}$ using characters in the connected component of the conflict graph of $\{M_i\}$ when 1) bottoms out. The union of all $T_i$, with the rest of the perfect characters added using Gusfield’s linear time algorithm outputs the final phylogeny.

\subsection{Asymptotic growth of the number of homoplasms with the number species and number sites $q(n, m)$}

To understand the complexity of BNPP, we study the dependence of distance $q$ on the input taxon set of size $n$ and character set of size $m$, i.e. $q(n, m)$. We show below that the expected number of homoplasm events scales linearly in $n$ and $m$, assuming a pure birth branching process for the evolution process. We present the scaling for a toy model in \ref{sec:q_asymptotic_growth} in the appendices.

For an observed set $S$ of taxa of size $n$, assuming a pure birth branching process leading to this set of taxa, the number of nodes in the phylogenetic tree leading to $S$ is of order $O(n)$. Given that the expected number of homoplasm events scale linearly with the number of nodes in the phylogenetic tree as shown in \ref{sec:q_asymptotic_growth}'s calculation.

The number of characters that has experienced one mutation  is linear in the number of characters and time since we assume that the mutation of each site follows an independent poisson distribution. The number of homoplasm is exponential in time as the tree area grows exponentially in time.

\section{Quantum algorithm for Binary Near Perfect Phylogeny}

There has been recent interest in combining techniques from classical algorithm design such as dynamic-programming with quantum search. \cite{doi:10.1137/1.9781611975482.107} proved the first result in combining Grover's search with dynamic programming and \cite{10.1007/978-3-030-58150-3_19} applied the same method to another dynamic programming problem called the Minimum Steiner Tree (MST) problem. We use the MST quantum subroutine of \cite{10.1007/978-3-030-58150-3_19} to replace the classical DP subroutine. We give a plugin lemma \ref{lem:mst_plugin_bnpp} which we use to compare the different results.

\begin{lemma}[MST to BNPP]\label{lem:mst_plugin_bnpp}
    Given a MST algorithm running in time \(O^*(C^k)\), there exists a BNPP algorithm running in time \(O^*((2C^2\frac{2C^2+1}{2C^2-1})^q+8^qnm^2)\). k is the number of vertices to be connected for MST. q is the near-perfect number for the BNPP solution. n is number of taxa and m is the number of characters.
\end{lemma}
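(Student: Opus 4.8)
The plan is to re-run the complexity analysis of the classical \textsc{BuildNPP} procedure (Algorithm \ref{alg:classical_bnpp}, from \cite{sridhar_algorithms_2007}) while keeping the cost of the Steiner-tree subroutine symbolic and equal to $O^*(C^k)$, and then to read off how $C$ propagates into the two additive terms of the claimed bound. The crucial structural observation is that the Steiner-tree solver is invoked only in Part 2, as a black box applied independently to each matrix $M_i$ that remains when the divide-and-conquer loop bottoms out; the branching of Part 1 together with Gusfield's linear-time completion of the perfect characters does not depend on which Steiner-tree algorithm is used. Hence the summand $O^*(8^q n m^2)$ — arising from the $4^q$ guesses of $v$, the $2^q$ guesses of $(r,p)$, and the $O(nm^2)$ per-leaf bookkeeping — carries over verbatim, and all of the real work goes into re-deriving the first summand.

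First I would make precise the linear relationship between the penalty and the number of Steiner terminals that is asserted after Definition \ref{def:conflict_graph}, in the explicit form needed here: a reduced instance of residual penalty $p$ (one on which the while-guard of Algorithm \ref{alg:classical_bnpp} no longer fires) induces a Steiner instance on at most $k \le 2p$ terminals inside the Hamming graph on its conflict characters. Substituting this into the subroutine bound yields a per-leaf cost $O^*(C^{2p}) = O^*((C^2)^{p})$, which is exactly where the factor $C^2$ in the statement comes from: each unit of residual penalty costs $C^2$ rather than $C$.

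Next I would carry out the accounting over the recursion tree of Part 1, balancing branching against residual Steiner cost. A leaf reached after spending $j$ of the $q$ penalty units on splits carries residual penalty $q-j$ and hence Steiner cost $O^*((C^2)^{q-j})$, while the number of guess-combinations reaching such leaves grows geometrically in $j$; summing the product over $j$ produces a geometric series. Recognising the target factor algebraically as $\frac{2C^2+1}{2C^2-1}=\frac{1+r}{1-r}$ with $r=1/(2C^2)$ tells us that the relevant ratio must be $1/(2C^2)$, and evaluating the series in closed form gives the first summand $O^*\big((2C^2\tfrac{2C^2+1}{2C^2-1})^q\big)$. As a consistency check, the classical Dreyfus--Wagner solver has $C=3$, and $\frac{2\cdot 9\,(2\cdot 9+1)}{2\cdot 9-1}=\frac{342}{17}\approx 20.1\le 21$, so the derivation must recover the $O(21^q+8^qnm^2)$ bound of \cite{sridhar_algorithms_2007}.

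I expect the main obstacle to be the tightness of this summation. The naive estimate simply multiplies the full branching $8^q$ by the worst-case Steiner cost $C^{2q}$ and yields the much weaker $(8C^2)^q$, so the argument must genuinely exploit the trade-off that branching more deeply shrinks the residual penalty left to the Steiner solver, and must pin down the precise ratio $1/(2C^2)$ that turns the loose product into the stated base. A secondary point, since Lemma \ref{lem:mst_plugin_bnpp} is phrased as a black-box plugin, is to confirm that the quantum Steiner-tree routine of \cite{10.1007/978-3-030-58150-3_19} can indeed be called as an $O^*(C^k)$ oracle at each leaf — with its success probability amplified to a constant — without disturbing the derandomized branching of Part 1, so that the two contributions remain additive exactly as written.
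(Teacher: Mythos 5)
Your high-level framing — leave the Part 1 branching and the \(8^q nm^2\) term untouched, re-derive only the Steiner cost with a symbolic base \(C\) — matches the paper, and your sanity check against the classical \(21^q\) bound correctly identifies the formula's provenance. But there are two genuine gaps. The first is your per-leaf cost: you bound the terminals by \(k\le 2p\) and conclude a cost of \(O^*((C^2)^p)\) per reduced instance, asserting this is "exactly where the factor \(C^2\) in the statement comes from." The base in the statement, however, is \(2C^2\), and the factor \(2\) is not a branching artifact: it is the size of the Steiner space. The paper's Lemma \ref{lem:steiner} makes this explicit: a reduced instance with penalty \(s\) and \(t\) nonisolated conflict vertices has at most \(s+t+1\) terminals, but the graph on which the \(O^*(C^k)\) subroutine runs is the sub-hypercube spanned by the \(t\) conflict characters, which has \(2^t\) vertices; since \(O^*\) suppresses only factors polynomial in the input size, the subroutine costs \(C^{s+t}\cdot 2^t \le (2C^2)^{s}\) (using \(t\le s\)), not \((C^2)^s\). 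Your later step of "recognising" that the geometric ratio "must be" \(1/(2C^2)\) from the algebraic shape of the target is circular — a proof cannot infer its intermediate quantities from its conclusion — and it silently conceals this missing \(2^t\): carried out honestly, your accounting would yield a base of roughly \(C^2\tfrac{C^2+1}{C^2-1}\), which is not the claimed bound and is not justified, precisely because the hypercube-size cost was dropped.

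The second gap is the structure of the summation itself. You sum over a single depth parameter \(j\) with leaf counts "growing geometrically," i.e. a sum of the form \(\sum_j b^j\beta^{q-j}\); any such series evaluates to \(O^*(\max(b,\beta)^q)\) and can never produce a factor of the form \(\tfrac{1+r}{1-r}\). The paper instead tracks two distinct progress parameters: \(\gamma\), the residual total penalty, and \(\rho\), the number of multiply-mutating characters not yet identified ("marked"); each branching step on \(c(v)\) either decreases \(\gamma\) (the character mutates once) or decreases \(\rho\) (a new multiple mutant is marked), giving the recurrence \(T(\gamma,\rho)\le \max\{\alpha^{\gamma},\, T(\gamma-1,\rho)+T(\gamma,\rho-1)+1\}\) with \(\alpha=2C^2\), whose solution \(\alpha^{\gamma+1}\bigl(\tfrac{\alpha+1}{\alpha-1}\bigr)^{\rho+1}\) is verified by induction. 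The factor \(\bigl(\tfrac{2C^2+1}{2C^2-1}\bigr)^{q}\) thus arises from interleaving the two kinds of steps (effectively a path count with binomial coefficients \(\binom{i+j}{j}\)), not from a one-dimensional geometric series over recursion depth. To repair your argument you would need to introduce this second parameter (or an equivalent two-dimensional path count) and run the induction; as proposed, your summation cannot reach the stated bound.
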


The proof to this lemma is given in appendix \ref{sec:plug_in}. We will plug in the following result from \cite{10.1007/978-3-030-58150-3_19}.

\begin{theorem}[Theorem 1 of \cite{10.1007/978-3-030-58150-3_19}]\label{the:q_mst}
There exists a quantum algorithm that solves with high probability the Minimum Steiner Tree problem in time $O^*(1.812^k)$, where k denotes the size of the terminal set.
\end{theorem}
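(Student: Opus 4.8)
The plan is to combine the classical Dreyfus--Wagner dynamic program for Steiner tree with Grover-style quantum minimum finding, following the quantum dynamic-programming framework of \cite{doi:10.1137/1.9781611975482.107}. First I would recall the recurrence: for a terminal subset $S \subseteq K$ and a vertex $v \in V$, let $D[S,v]$ denote the weight of a cheapest tree connecting $S \cup \{v\}$. The base case is $D[\{t\},v] = \mathrm{dist}(t,v)$, and the recurrence expresses $D[S,v]$ as a minimum, over a branching vertex $u$ and a bipartition $S = S_1 \sqcup S_2$, of $\mathrm{dist}(v,u) + D[S_1,u] + D[S_2,u]$; the optimum is $\min_v D[K,v]$. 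Classically this costs $O^*(3^k)$ because of the sum over subsets of their sub-partitions.

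The quantum speedup comes from a meet-in-the-middle at the top of the recursion together with a precomputed table. Concretely, I would fix a threshold $\tau \in (0,1)$, compute $D[S,v]$ classically for every subset $S$ with $|S| \le \tau k$ and every $v$, and store these values in QRAM. Then $D[K,v]$ is evaluated by a quantum search over bipartitions $K = S_1 \sqcup S_2$ and branching vertices $u$: D\"urr--H\o yer quantum minimum finding locates the optimal split in the square root of the number of candidates, where each candidate value is assembled from table look-ups when the halves are small enough and otherwise from a nested quantum search that recurses one more level. The final answer $\min_v D[K,v]$ is one more minimum-finding step over $|V|$.

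I would then balance the two costs. The classical precomputation scales like $\sum_{j \le \tau k}\binom{k}{j} 2^{j}$, whose base is controlled by the binary entropy at $\tau$, while the nested search contributes the square roots of the bipartition counts $\binom{k}{k/2}$ and $\binom{k/2}{\tau k}$ at the two recursion levels. Writing both exponents as functions of $\tau$ and minimising their maximum is a routine optimisation; at the optimal threshold the two costs meet at base $1.812$, giving the claimed $O^*(1.812^k)$ runtime. The polynomial $|V|$- and weight-dependent factors, as well as the QRAM look-up cost, are absorbed into the $O^*$ notation.

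The main obstacle I expect is the error analysis of the nested search rather than the counting. Each inner quantum minimum-finding routine returns the correct value only with bounded probability, so nesting one Grover search inside another requires amplifying the inner success probability to $1 - o(1/\text{(number of outer queries)})$ before composing, which costs a logarithmic overhead that must be checked not to spoil the exponent. A secondary point to verify is that the branching-vertex dimension and the shortest-path connection terms specific to Steiner tree (absent in plain subset-DP problems such as TSP) enter only polynomially, so that the entropy balance above is unchanged and the base remains $1.812$.
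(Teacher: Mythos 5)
This theorem is not proved in the paper at all: it is imported verbatim as Theorem~1 of \cite{10.1007/978-3-030-58150-3_19}, so the comparison below is against that reference's actual argument, whose key recursion the paper reproduces in appendix \ref{sec:dw_recursion}. Your high-level skeleton --- classically precompute optimal sub-Steiner trees for small terminal subsets, store them in QRAM, run nested D\"urr--H{\o}yer minimum finding over splits of the terminal set, amplify inner success probabilities at logarithmic cost, and balance the classical and quantum exponents --- is indeed the strategy of the cited proof (inherited from the TSP framework of \cite{doi:10.1137/1.9781611975482.107}).

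However, there is a genuine gap at the heart of your argument: you run the meet-in-the-middle on the \emph{plain} Dreyfus--Wagner recurrence, in which the tree for $S \cup \{v\}$ splits at a single branching vertex $u$ into trees for $S_1 \cup \{u\}$ and $S_2 \cup \{u\}$. That recurrence is only valid when the minimum ranges over \emph{all} bipartitions $S = S_1 \sqcup S_2$, including arbitrarily unbalanced ones; but your speedup requires restricting to near-balanced splits (otherwise either the QRAM table must cover all $2^k$ subsets or the recursion does not shrink). Unlike a Hamiltonian path, an optimal Steiner tree need not admit a single vertex whose removal splits the terminals evenly: a spider whose three legs each carry $k/3$ terminals forces at best a $1/3$--$2/3$ split at any vertex, and iterating such splits pushes the table threshold toward $H(4/9)k \approx k$, destroying the advantage. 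The missing ingredient --- and the actual technical core of \cite{10.1007/978-3-030-58150-3_19} --- is the balanced-separator version of the recursion: for any tolerance $\epsilon$ there exists a set $A \subseteq V$ with $|A| \leq \lceil \log 1/\epsilon \rceil$ such that
\begin{equation*}
W_G(K)=
\min _ {\substack{ K_1 \subseteq  K \\ |K_1| =  (\frac{1}{2} \pm \epsilon)k }}
\min _ {\substack{ A \subseteq  V \\ |A| \leq  \lceil \log 1/\epsilon \rceil }}
W_{G}(K_1 \cup A) + W_{G/A}(K_2 \cup \{v_A\}),
\end{equation*}
where $G/A$ denotes contraction of $A$ to a single vertex $v_A$. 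The search must therefore range over the pairs $(K_1, A)$, not over bipartitions alone, and the extra $|V|^{O(\log 1/\epsilon)}$ factors and the $\pm\epsilon$ slack in the subset sizes are precisely what makes the optimized base come out as $1.812$ rather than the $1.728$ of TSP; your entropy balance, computed for single-vertex splits, cannot produce that constant except by assertion. Your two flagged obstacles (nested-search error amplification, and the shortest-path/branching-vertex factors being polynomial) are real but routine; the separator-set recursion is the step your proposal cannot do without.
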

We obtain the following result.
\begin{theorem}[Quantum Algorithm for BNPP]\label{the:q_bnpp}
There exists a quantum algorithm that solves with high probability BNPP in time $O^*(8.926^q + 8^q nm^2)$, where n is the number of taxa, m is the number of characters and q is the distance from a perfect phylogeny. The algorithm uses the QRAM model.
\end{theorem}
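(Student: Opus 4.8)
The plan is to obtain Theorem~\ref{the:q_bnpp} as a direct instantiation of the reduction packaged in Lemma~\ref{lem:mst_plugin_bnpp}, feeding it the quantum Minimum Steiner Tree routine of Theorem~\ref{the:q_mst}. First I would set $C = 1.812$, the base of the exponential running time $O^*(C^k)$ that Theorem~\ref{the:q_mst} guarantees for a terminal set of size $k$ in the QRAM model. Since Lemma~\ref{lem:mst_plugin_bnpp} promises a BNPP algorithm running in time $O^*\!\left(\left(2C^2\tfrac{2C^2+1}{2C^2-1}\right)^{q} + 8^{q} n m^2\right)$ whenever an $O^*(C^k)$ MST solver is available, and since that lemma already absorbs the linear relationship between $k$ and $q$, substituting $C = 1.812$ reduces the theorem to evaluating the single constant $2C^2\tfrac{2C^2+1}{2C^2-1}$.

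The second step is the routine arithmetic. With $C^2 \approx 3.283$ one gets $2C^2 \approx 6.567$, hence $\tfrac{2C^2+1}{2C^2-1} \approx \tfrac{7.567}{5.567} \approx 1.359$, and therefore $2C^2\tfrac{2C^2+1}{2C^2-1} \approx 8.926$, which is the claimed base of the dominant exponential term. The additive $8^{q} n m^2$ term is inherited verbatim from the classical divide-and-conquer stage of Lemma~\ref{lem:mst_plugin_bnpp} and Algorithm~\ref{alg:classical_bnpp}; that stage is left deterministic (the $4^{q}$ and $2^{q}$ guesses derandomized into the enumeration giving $8^{q}$) and is untouched by swapping in the quantum subroutine. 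Because Theorem~\ref{the:q_mst} operates in the QRAM model, so does the composed BNPP algorithm, matching the hypothesis of the lemma and the model stated in Theorem~\ref{the:q_bnpp}.

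The one point deserving care, rather than raw calculation, is the probabilistic guarantee, and this is where I expect the only genuine (if modest) obstacle. The quantum MST routine succeeds merely with high probability, yet it is invoked once per connected component of the conflict graph across all of the $O^*(8.926^q)$ branches produced when the derandomized recursion bottoms out, so there are up to exponentially many calls. I would therefore boost each call's failure probability below $\varepsilon / 8.926^{q}$ by $O(q)$ repetitions (or standard amplitude amplification) and apply a union bound, so that overall success stays high; since $O(q)$ is polynomial, these repetitions multiply the per-branch cost by a factor absorbed into $O^*$ and hence leave the base $8.926$ unchanged. Verifying that this nesting of a bounded-error quantum primitive inside the classical recursion composes cleanly, without the error probabilities eroding the $O^*$ bound, is the substantive check; once it is in place, the headline numerical claim follows immediately from the substitution above.
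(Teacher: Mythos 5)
Your proposal matches the paper's own route exactly: the paper obtains Theorem~\ref{the:q_bnpp} precisely by instantiating Lemma~\ref{lem:mst_plugin_bnpp} with the $O^*(1.812^k)$ quantum MST algorithm of Theorem~\ref{the:q_mst}, so that $2C^2\tfrac{2C^2+1}{2C^2-1} \approx 8.926$ with the $8^q nm^2$ term carried over from the classical divide-and-conquer stage. Your additional point about boosting the per-call success probability of the bounded-error quantum subroutine and union-bounding over the exponentially many invocations is a sound check that the paper itself leaves implicit, and it correctly costs only a polynomial factor absorbed into $O^*$.
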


\subsection{Grover's algorithm for Dynamic Programming}
We explain the relationship between MST and BNPP, then state the complexity advantage of the quantum algorithm relative to the classical s. We define the problem of a minimum Steiner Tree (MST) below which is equivalent to BNPP \ref{prob:binary_near_perfect_phylogeny} barring property 2 in \ref{def:phylo}.

\begin{definition}[Minimum Steiner Tree]\label{prob:minimum_steiner_tree}
Given a graph $G = (V, E)$, a function $length: 2^E \rightarrow \mathbb{R}$ and a subset of vertices $V' \subseteq V$, find a subgraph $G' = (V'', E'),\ V' \subseteq V'' \subseteq V$ where $G'$ satisfies the following property.

$$W_G(V') = \arg \min_{G'=(V', E')} length(E')$$

$W_G(V')$ is the weight of a minimum Steiner tree connecting $V'$ in $G$.
\end{definition}

More specifically, we'd like to consider the problem of finding the MST on a graph embedded in a hypercube.

\begin{definition}[Minimum Steiner Tree on a hypercube]
\label{prob:mst_on_hypercube}
Let $l$ be a labelling function from a lattice point on a hypercube of dimension m to a boolean string.
$$\forall \ v \in L, \ \exists \ l(v) \in \left\{ 0,1\right\}^ {m}$$
G denotes the graph structure induced by the m-dimensional hypercube. For an input set $I$ of lattice points from the m-dimensional hypercube, identify $W_G(I)$, the weight of the minimum Steiner Tree with input vertex set $I$ on graph $G$.
\end{definition}

Definitions \ref{prob:binary_near_perfect_phylogeny} and \ref{prob:mst_on_hypercube} show that MST is equivalent to BNPP. MST obtains a speedup using Grover's algorithm and dynamic programming \cite{10.1007/978-3-030-58150-3_19} with \(C=1.812\). This brings the overall complexity from $O(21 ^ q + 8 ^ q n m ^ 2)$ in \cite{sridhar_algorithms_2007} to $O( 8.926^ q + 8 ^ q n m ^ 2)$. The best classical case using MST algorithm from \cite{10.1145/1250790.1250801} with \(C=2\) and the framework of \cite{sridhar_algorithms_2007} yields $O(10.286^q+8^qnm^2)$. We naturally ask as a next step how this algorithm can be implemented and if the advantage can be preserved.

\subsection{Implementation of the quantum dynamic programming}

To implement the above mentioned quantum algorithm for MST using dynamic programming, the two pressing issues are oracle efficiency and classical data encoding efficiency. 
Adopting the same multi-level recursive search with measurement between layers yields an encoding complexity that eliminates the advantage. 
The encoding cost alone will be $\Theta^*(2^k)$ since the cost of each each node in the Dreyfus-Wagner (DW) recursion will need to be encoded. DW recursion is a classical recursion solving the MST problem through dynamic programming and will serve as the basis of the quantum search for MST. The form of DW recursion as applied in \cite{10.1007/978-3-030-58150-3_19} is presented below. 

\begin{equation*}\label{def:dw_encoding_cost}
W_G(K)=
\min _ {\substack{ K_1 \subseteq  K \\ |K_1| =  (\alpha \pm \eta) }}
\min _ {\substack{ A \subseteq  V \\ |A| \leq  \lceil \log 1/\epsilon \rceil }}
W_{G}(K_1 \cup A) + W_{G/A}(K_2 \cup \{v_A\}) 
\end{equation*}

One solution to the two challenges is to create the full solution subspace, where each solution is indexed by the branching selections on all levels, and a single Grover's search is run on this superposition of solutions. Each candidate solution is indexed by the choices in the sets of $\{K_1\}, \{A\}$ across the different layers.

We describe our approach which is not necessarily the optimal approach. A solution is indexed by a combination of different branches on the 3 levels of DW recursion. A weight register accompanies each branching or terminal node in the 3 levels of DW recursion.

 \begin{figure}[h!]
 \centering
  \includegraphics[width=1\textwidth]{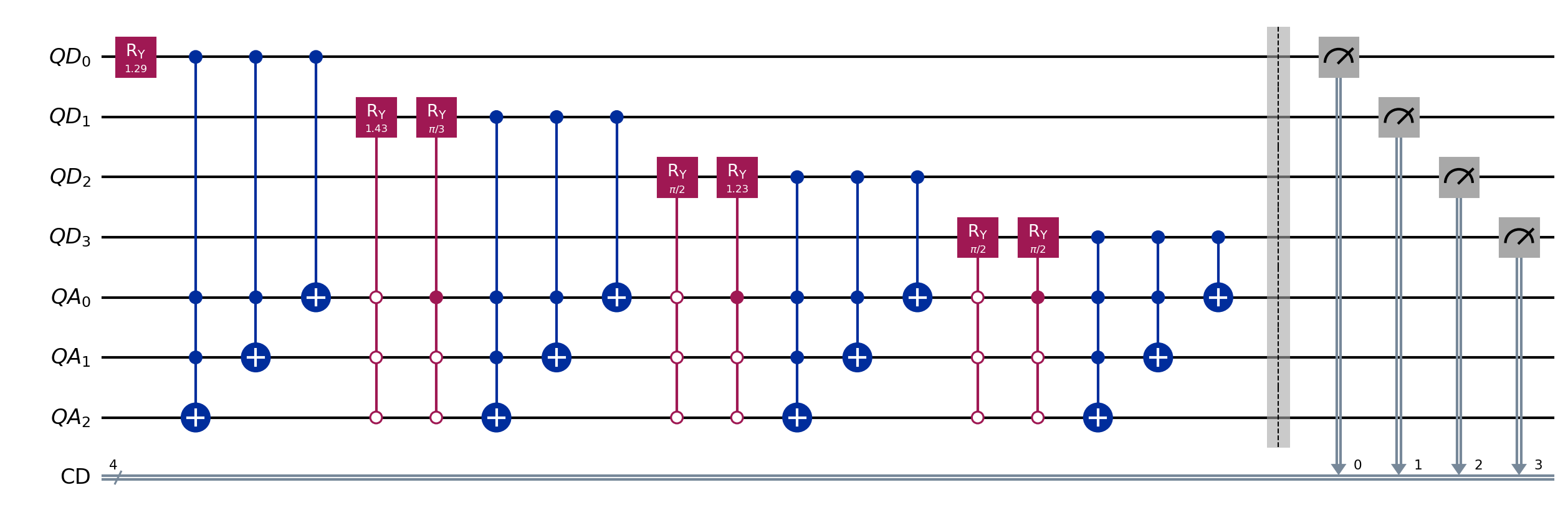}
 \caption{Dicke state preparation circuit over 4 qubits}
 \end{figure}
 
 

\subsection{State preparation for 3-level DW recursion}
We construct the index qubits top down. The search space is spanned by a superposition of Dicke states of the right hamming weights, which is governed by the $\eta$ parameter, where $\log \frac{1}{\eta}$ specifies the size of the subgraphs $A$ to be minimized over for. We use Dicke states to achieve efficient encoding of the solution space as opposed to using the full space generated by Hadamards. Dicke states can be efficiently and deterministically prepared using the procedure from \cite{10.1007/978-3-030-25027-0_9} with a gate complexity of $O(nk)$, where $n$ is the total number of qubits and k is the occupation number. A more generalized Dicke state preparation for a superposition of arbitrary Dicke states can be efficiently achieved using \cite{icissp23}.

\begin{figure}[H]
\centering

\begin{tikzpicture}[
  ->,
  xscale=1.5,
  grow=down,
  level 1/.style={sibling distance=40mm},
  level 2/.style={sibling distance=20mm},
  level 3/.style={sibling distance=15mm},
  level distance=15mm,
  every node/.style = {}
]
\node (r) {\( \{\ket{K_1^1 , A_1^1}\}_{K_1^1 , A_1^1} \)}
  child { node (21) {\( \{ \ket{K_2^1 , A_2^1}\}_{K_2^1 , A_2^1} \)}
    child { node (31) {\( \{\ket{K_3^1 , A_3^1}\}_{K_3^1 , A_3^1} \)} 
      edge from parent node[left, draw=none] {\( \substack{K_3^1 \subseteq K_2^1 \cup A_2^1 \\ A_3^1 \subseteq V} \)}
    }
    child { node (32) {\( \{\ket{K_3^3 , A_3^2}\}_{K_3^3 , A_3^2} \)} 
      edge from parent node[right, draw=none] {\( \substack{K_3^3 \subseteq K_2^1 \cup \{v_{A_2^1}\} \\ A_3^2 \subseteq V} \)}
    }
    edge from parent node[left, draw=none] {\( \substack{K_2^1 \subseteq K_1^1 \cup A_1^1 \\ A_2^1 \subseteq V} \)}
  }
  child { node (22) {\( \{\ket{K_2^3 , A_2^2}\}_{K_2^3 , A_2^2} \)}
    child { node (33) {\( \{\ket{K_3^5 , A_3^3}\}_{K_3^5 , A_3^3}  \)} 
      edge from parent node[left, draw=none] {\( \substack{K_3^5 \subseteq K_2^3 \cup A_2^2 \\ A_3^3 \subseteq V} \)}
    }
    child { node (34) {\( \{\ket{K_3^7 , A_3^4}\}_{K_3^7 , A_3^4} \)} 
      edge from parent node[right, draw=none] {\( \substack{K_3^7 \subseteq K_2^3 \cup \{v_{A_2^1}\} \\ A_3^4 \subseteq V} \)}
    }
    edge from parent node[right, draw=none] {\( \substack{K_2^3 \subseteq K_1^1 \cup A_1^1 \\ A_2^2 \subseteq V} \)}
  };
\end{tikzpicture}
\caption{Constructing solution space for graph index qubits with subset relations shown on the edges. Each node represents a superposition of a set of sub-solution states. The arrows are directed top-down with set relations constraining the lower level sets by the higher level sets.}
\label{fig:tree-mst-graph-index}
\end{figure}
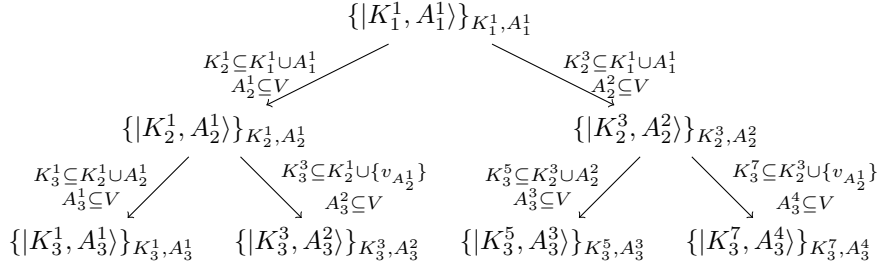

We construct the weight qubits bottom up. A weight register $W_3^b$ holds the weights of the classically computed trees on the bottom level. A first adder is then used to add up two weights with matching $A_3^b$ and $K_3^b$ according to \ref{sec:dw_recursion} to form a weight on the second level. A second adder is used to add up two weights with matching $A_2^b$ and $K_2^b$ to form a weight on the first level.

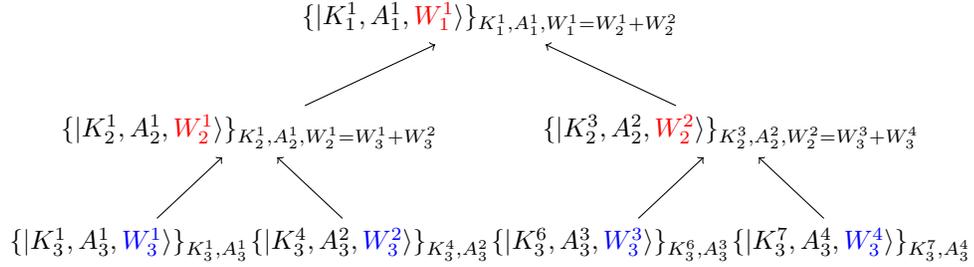
\begin{figure}[H]
\centering
\begin{tikzpicture}[
<-,
xscale=1.6,
  grow=down,
  level 1/.style={sibling distance=40mm},
  level 2/.style={sibling distance=20mm},
  level 3/.style={sibling distance=15mm},
  level distance=15mm,
  every node/.style = {}
]
\node (11) {\( \{\ket{K_1^1 , A_1^1, \textcolor{red}{W_1^1} }\}_{K_1^1 , A_1^1, W_1^1 = W_2^1 + W_2^2} \)}
  child { node (21) {\( \{\ket{K_2^1 , A_2^1, \textcolor{red}{W_2^1}}\}_{K_2^1 , A_2^1, W_2^1 = W_3^1 + W_3^2} \)}
    child { node (31) {\( \{\ket{K_3^1 , A_3^1, \textcolor{blue}{W_3^1}}\}_{K_3^1 , A_3^1} \)}
    }
    child { node (32) {\( \{\ket{K_3^4 , A_3^2, \textcolor{blue}{W_3^2}}\}_{K_3^4 , A_3^2} \)}
    }
  }
  child { node (22) {\( \{\ket{K_2^3 , A_2^2, \textcolor{red}{W_2^2}}\}_{K_2^3 , A_2^2,W_2^2 = W_3^3 + W_3^4} \)}
    child { node (33) {\( \{\ket{K_3^6 , A_3^3, \textcolor{blue}{W_3^3}}\}_{K_3^6 , A_3^3} \)}
    }
    child { node (34) {\( \{\ket{K_3^7 , A_3^4, \textcolor{blue}{W_3^4}}\}_{K_3^7 , A_3^4} \)}
    }
  };

\end{tikzpicture}
\caption{Constructing weight qubits for solution space.}
\label{fig:tree-mst-graph-index-with-weight}
\end{figure}

\begin{figure}[H]
\centering
\begin{tikzpicture}[
->,
xscale=1.5,
  grow=down,
  level 1/.style={sibling distance=40mm},
  level 2/.style={sibling distance=20mm},
  level 3/.style={sibling distance=15mm},
  level distance=15mm,
  every node/.style = {}
]
\node (11) {\( \frac{1}{\sqrt{d}}\sum_{K_1^1 , A_1^1} \ket{K_1^1 , A_1^1, \textcolor{red}{W_1^1} }_{W_1^1 = W_2^1 + W_2^2} \)}
  child { node (21) {\(\bigotimes \sum_{K_2^1 , A_2^1} \ket{K_2^1 , A_2^1, \textcolor{red}{W_2^1}}_{ W_2^1 = W_3^1 + W_3^2} \)}
    child { node (31) {\( \bigotimes \sum_{K_3^1 , A_3^1} \ket{K_3^1 , A_3^1, \textcolor{blue}{W_3^1}} \)}
    }
    child { node (32) {\( \ket{K_3^4 , A_3^2, \textcolor{blue}{W_3^2}} \)}
    }
  }
  child { node (22) {\( \ket{K_2^3 , A_2^2, \textcolor{red}{W_2^2}}_{W_2^2 = W_3^3 + W_3^4} \)}
    child { node (33) {\( \bigotimes \sum_{K_3^6 , A_3^3} \ket{K_3^6 , A_3^3, \textcolor{blue}{W_3^3}} \)}
    }
    child { node (34) {\( \ket{K_3^7 , A_3^4, \textcolor{blue}{W_3^4}} \)}
    }
  };

\end{tikzpicture}
\caption{The final quantum state ready for quantum minimum finding. The quantum state shall be read as if it were a 3 line equation top to down, left to right.}
\label{fig:tree-mst-graph-index-with-weight-state}
\end{figure}

The above procedure shows the existence of an efficient \(U_{idx}\) mapping \(| 0^n \rangle \) to a uniform superposition of the solution subspace indices, resting on the existence of efficient Dicke state preparation \cite{icissp23}. This implies an efficient projector \(\Pi_{sub}  = U_{idx} (|0^n\rangle \langle0^n| \otimes I )U_{idx}^{\dagger} \) to the solution subspace and an efficient reflector \(R_{sub} = 2\Pi_{sub}-I\) in that subspace. These ensure that we can efficiently implement a Grover oracle and carry out the Grover as part of the quantum minimum finding process to find the minimum \(W_1^1\).

\subsection{Encoding complexity of classical data}

The classical sequence data grows as $O(nm)$ where $n$ is the number of input sequences and $m$ is the sequence length. The number of Steiner weights to be encoded scales as $O(2^{(H(\frac{1-\beta}{4})+\frac{1-\beta}{4})}k)$ \cite{10.1007/978-3-030-58150-3_19}. The classical data needed to be encoded into a quantum state are the classical weights precomputed from stage 1 of the quantum algorithm for MST with their corresponding graph indices. The state used for computation is shown below, consisting of $14|V| + 7\lceil \log_2|W| \rceil$ qubits. We have not counted intermediate auxiliary qubits for encoding. They might be of importance when implementing the reflection for this relevant subspace though the auxiliary qubit count should be linear in the number of graph indexing qubits. The factor 14 comes from storing the subgraph information for the 2 subgraphs on each node across 3 levels. More generally, for a l-level DW recursion we need \(2(2^l-1)|V| + (2^l-1)\lceil \log_2|W| \rceil\) qubits excluding intermediate ancila.

The prepared state to be processed for amplitude amplification is shown below. All of the constraints regarding set relations cardinalities are obeyed according to \ref{sec:dw_recursion}. We have put each pair of sub-Steiner solutions on the same line. Blue indicates that the weights are classically pre-computed and red indicates that the weights are computed using a quantum adder.

\begin{equation}\label{def:search_space}
\begin{split}
\ket{\phi }=  \frac{1}{\sqrt{d}} \sum  \limits_{K_1^1 , A_1^1}  \ket{K_1^1 , A_1^1, \textcolor{red}{W_1^1} }  \\
\frac{1}{\sqrt{d}} \bigotimes   \sum \limits_{K_2^1 , A_2^1}  \ket{K_2^1 , A_2^1, \textcolor{red}{W_2^1}} \ket{K_2^3 , A_2^2, \textcolor{red}{W_2^2}} \\
\frac{1}{\sqrt{d}} \bigotimes   \sum \limits_{K_3^1 , A_3^1} \ket{K_3^1 , A_3^1, \textcolor{blue}{W_3^1}} \ket{K_3^4 , A_3^2, \textcolor{blue}{W_3^2}}\\
\frac{1}{\sqrt{d}} \bigotimes   \sum \limits_{K_3^6 , A_3^3} \ket{K_3^6 , A_3^3, \textcolor{blue}{W_3^3}}\ket{K_3^7 , A_3^4, \textcolor{blue}{W_3^4}}
\end{split}
\end{equation}

We repeat the encoding four times for the classical weight data which is computed in stage 1 in \cite{10.1007/978-3-030-58150-3_19} for a subset of reduced Steiner instances and the classical computation grows at worst asymptotically as the query complexity of the quantum part of the algorithm. This implies that the quantum speedup is preserved concerning data encoding using QRAM. We notice that there could be more efficient encoding schemes by encoding level 2 states in $\lceil -\log \frac{1}{\epsilon} \rceil$ qubits, since the relative position for level 2 is encoded in level 1. However, this recursive indexing is impossible for level 3 since the encoding circuits need to identify relative position in $G$ while being ignorant of level 1 and level 2 branches. Conditioning using full level 1 and level 2 branches' information in the recursive indexing case requires an encoding circuit whose depth scales linearly in the total number of branches for this 3-level search tree. We present algorithm \ref{alg:q_build_search_space} which builds a uniform superposition of all index states from the solution subspace.

\begin{algorithm}[h]
\caption{BuildSearchSpace}\label{alg:q_build_search_space}

\KwData{classically pre-computated weight-index pairs}
\KwResult{ \(\ket{\phi}\) \ref{def:search_space}}
a) Encode \(K\) in the auxiliary qubits;

b) Using Dicke state preparation to span level 1 subgraph indices, wipe auxiliary qubits;

c) Compute set operations as specified in \ref{sec:dw_recursion} to obtain 
the relevant subsets of nodes from which level 2 \(K_x^y \) shall be generated from and store in auxiliary qubits;

d) Using Dicke state preparation to span level 2 subgraph indices;

e) Compute set operations as specified in \ref{sec:dw_recursion} to obtain 
the relevant subsets of nodes from which level 3 \(K_x^y \) shall be generated from and store in auxiliary qubits;

f) Using Dicke state preparation to span level 3 subgraph indices;

g) Encode pre-computed weights to the 4 branches on level 3 with corresponding indices;

h) Use quantum adder to compute the level 2 and 1 weights;
\end{algorithm}

\section{Polynomial space exact quantum algorithm for the Minimum Steiner Tree problem}
The following tables compare the time complexities between classical algorithms with QRAM-free quantum algorithms for different types of MST and space resource bounds.
\begin{table}[H]
\centering
\caption{Weight type vs.\ quantumness for Steiner problem algorithms}
\label{tab:steiner-poly-space}
\renewcommand{\arraystretch}{1.2}
\begin{tabular}{lcc}
\toprule
& $\mathrm{poly}$\text{-space-classical} & $\mathrm{poly}$\text{-space-quantum}\\
\midrule
\(w : E(G) \;\to\; \mathbb{N}\) & \(O^*(2^k)\) \cite{10.1145/1250790.1250801} & \(O^*(2^k)\)
\\
\(w : E(G) \;\to\; \mathbb{R}_{>0}\) & \(O^*(6.751^k)\) \cite{doi:10.1137/17M1140030} & \(O^*(2^k)\)
\\
\bottomrule
\end{tabular}
\end{table}

\begin{table}[H]
\centering
\caption{Weight type vs.\ quantumness for Steiner problem algorithms}
\label{tab:steiner-exp-space}
\renewcommand{\arraystretch}{1.2}
\begin{tabular}{lcc}
\toprule
& $\mathrm{exp}$\text{-space-classical} & $\mathrm{exp}$\text{-space-quantum}\\
\midrule
\(w : E(G) \;\to\; \mathbb{N}\) & \(O^*(2^k)\) \cite{10.1145/1250790.1250801} & \(O^*(2^k)\)
\\
\(w : E(G) \;\to\; \mathbb{R}_{>0}\) & \(O^*(2.684^k)\) \cite{fuchs2007speeding} & \(O^*(2^k)\)
\\
\bottomrule
\end{tabular}
\end{table}

We prove the following theorems regarding polynomial space quantum algorithms for MST in the circuit model. We use an upper bound for the binomial coefficients from Chapter 10, Lemma 7, p309 of \cite{bok:MW} and a polynomial space exact algorithm for the Minimum Steiner tree problem of \cite{doi:10.1137/17M1140030}. Theorem \ref{the:q_spacetime_tradeoff_mst} gives a polynomial space exact quantum algorithm for solving the minimum Steiner tree on a hybrid computer different from a recursive Grover search. We note that classically the best polynomial space exact algorithm for MST with general weights is given by \cite{doi:10.1137/17M1140030} with an exponential scaling of \(O^*(4^{(1+\epsilon)k}n^{f_1(\epsilon)})\) while the best exponential space algorithm for MST with general weights is given by \cite{10.1007/11672142_46} with an exponential scaling of \( O^*((2+\delta)^k n^{f_2({\delta}^{-1})})\). The subscripts in \(f_1, f_2\) denote the corresponding scaling functions given in each paper. In table \ref{tab:steiner-poly-space} and \ref{tab:steiner-exp-space} the classical counterparts we picked are the best FPT algorithms available. Theorem \ref{the:polyspace_q_mst} can be obtained by letting \( l \) tend to the maximal value of \(\lfloor \log k \rfloor\).

\begin{theorem}[Polynomial space exact quantum algorithm for MST]\label{the:polyspace_q_mst}
    Given a graph $G = {(V, E)},\, K \subseteq V$ in the weighted adjacency matrix representation and let \(k=|K|,\, n = |V|,\, m=|E|\). There exists a quantum algorithm that computes the weight of the minimum Steiner tree $W_G(K)$ in time $O^*(2^{k})$ using polynomial bits and $O(poly(n,m,k))$ qubits.
\end{theorem}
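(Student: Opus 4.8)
The plan is to derive Theorem~\ref{the:polyspace_q_mst} as the maximal-depth specialisation of the space--time tradeoff of Theorem~\ref{the:q_spacetime_tradeoff_mst}, taking the number of recursion levels to be $l = \lfloor \log k \rfloor$ and checking that at this depth the construction becomes QRAM-free while the runtime collapses to $O^*(2^k)$. The object I would work with is the $l$-level Dreyfus--Wagner search space $\ket{\phi}$ of~\ref{def:search_space}, built by Algorithm~\ref{alg:q_build_search_space}: a single uniform superposition over all index states encoding the subset and auxiliary choices $\{K_i^j, A_i^j\}$ down the recursion, over which one runs a single quantum minimum-finding pass to extract the least total weight $W_1^1$. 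This is exactly the hybrid, non-recursive-Grover scheme underlying Theorem~\ref{the:q_spacetime_tradeoff_mst}, so the whole task reduces to evaluating its guarantees at $l=\lfloor\log k\rfloor$.

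First I would fix $l = \lfloor \log k \rfloor$ and settle the space accounting. At this depth the recursion tree has $2^l \le k$ leaves, and the bottom-level sub-instances each retain only $O(1)$ terminals together with an auxiliary set of size $O(\log 1/\epsilon)$; hence there are at most $\mathrm{poly}(n)$ distinct bottom instances, and their minimum Steiner weights are computed classically in polynomial total time. The weight-loading step (g of Algorithm~\ref{alg:q_build_search_space}) then becomes a multiplexer over polynomially many cases, realisable by a polynomial-size gate sequence and eliminating any need for QRAM. The qubit budget is the $l$-level count $2(2^l-1)|V| + (2^l-1)\lceil \log_2 |W|\rceil$ stated earlier, which at $l = \lfloor\log k\rfloor$ evaluates to $O(kn + k\log|W|) = O(\mathrm{poly}(n,m,k))$, and the classical memory (the weighted adjacency matrix, the $\mathrm{poly}(n)$ bottom weights, and the control) is likewise polynomial.

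Next I would bound the runtime. The number of admissible index combinations spanning $\ket{\phi}$ is the product, over the $l$ levels, of the subset-and-auxiliary choices; applying the binomial estimate $\binom{k}{\alpha k} \le 2^{H(\alpha) k}$ of~\cite{bok:MW} level by level to the split-size enumeration bounds this count by $4^{(1+g(k,l))k}$, matching the classical $O^*(4^{(1+\epsilon)k})$ enumeration of~\cite{doi:10.1137/17M1140030}. A single D\"urr--H\o yer minimum-finding pass over a space of this size costs $\sqrt{4^{(1+g(k,l))k}} = 2^{(1+g(k,l))k}$ oracle calls, each oracle being the polynomial-size weight-adder circuit. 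Driving the recursion to full depth makes the per-level terminal sets shrink geometrically, so the entropy overheads telescope; I would show $g(k,\lfloor\log k\rfloor)\,k = O(\log k)$, whence $2^{g(k,l)k}$ is merely a polynomial factor and the time bound collapses to $O^*(2^k)$.

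The step I expect to be the main obstacle is precisely this limit analysis of $g(k,\lfloor\log k\rfloor)$. I must verify that pushing $l$ to its maximum does not reintroduce the super-polynomial $n^{f_1(\epsilon)}$ blow-up that afflicts the classical $\epsilon \to 0$ limit of~\cite{doi:10.1137/17M1140030}: here the overhead should be governed by the discrete level count through the binomial bound rather than by a vanishing $\epsilon$, and I would have to confirm that the telescoped exponent overhead stays $O(\log k)$ rather than $\omega(\log k)$. I would also need to check that the success probability of the structured state preparation and of the single minimum-finding pass can be amplified to constant at only polynomial cost, so that neither the qubit count nor the classical bit count leaves $\mathrm{poly}(n,m,k)$. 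Establishing $g(k,\lfloor\log k\rfloor)\,k = O(\log k)$ rigorously via~\cite{bok:MW} is the crux of the argument.
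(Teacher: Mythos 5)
Your proposal takes essentially the same route as the paper's proof: both specialize the construction of Theorem~\ref{the:q_spacetime_tradeoff_mst} at maximal depth $l_0=l_m=\lfloor\log k\rfloor$ and then re-derive the runtime from the product-of-binomials formula $(T_{cla}+T_{enc})\times T_{aa}$ --- correctly \emph{not} citing the final base-$e$ bound $O^*(e^{(1+g(k,l))k})$, which at $l=\lfloor\log k\rfloor$ would only give $O^*(e^k)$ --- so that the telescoped search-space bound of roughly $4^{k}$, square-rooted by minimum finding and multiplied by the now-polynomial precomputation and encoding costs $Ck+k^2$, yields $O^*(2^k)$. Your level-by-level entropy bound at split ratio $\tfrac12$ coincides with the paper's central-binomial bound $\binom{2j}{j}\le 2^{2j-1}$, and your explicit qubit/QRAM-freeness accounting fills in details the paper's proof leaves implicit.
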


The proof for theorem \ref{the:polyspace_q_mst} is in appendix \ref{sec:proof_mst_poly_space_q_exact}
.
\begin{theorem}[Quantum spacetime trade-off for MST]\label{the:q_spacetime_tradeoff_mst}
    Given a graph $G = {(V, E)},\, K \subseteq V$ in the weighted adjacency matrix representation and let \(k=|K|,\, n = |V|,\, m=|E|\). Let $l$ be the number of Dreyfus-Wagner recursion levels. There exists a quantum algorithm that computes the weight of the minimum Steiner tree $W_G(K)$ in time $O^*(e^{(1+g(k,l))k})$ using polynomial bits and $O(poly(n,m,k))$ qubits. Note that \(\lim_{k \to \infty, l \to \lfloor \log k\rfloor} g(k, l) = 0\).
\end{theorem}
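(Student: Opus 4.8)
The plan is to realise the $l$-level Dreyfus--Wagner recursion of \ref{sec:dw_recursion} as a depth-$l$ recursion tree in which every internal node performs a balanced split of its current terminal set and every leaf is solved by the classical polynomial-space exact algorithm of \cite{doi:10.1137/17M1140030}. At an internal node carrying a terminal set of size $\kappa$, the recursion minimises over a choice of a subset $K_1\subseteq K$ with $|K_1|\approx\kappa/2$ together with a connector set $A\subseteq V$ with $|A|\le\lceil\log 1/\epsilon\rceil$, and recurses on the two instances $K_1\cup A$ and $K_2\cup\{v_A\}$ in the contracted graph $G/A$. I would implement the minimisation quantumly by Dürr--Høyer quantum minimum finding over the split index, using the Dicke-state index construction already exhibited in \ref{def:search_space}, so that each evaluation of a candidate split invokes the two child subproblems one level down. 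Writing $T(\kappa)$ for the time at a node of terminal size $\kappa$, this gives the recurrence $T(\kappa)=O^*\big(\sqrt{N(\kappa)}\cdot T(\kappa/2)\big)$, where $N(\kappa)$ is the number of admissible $(K_1,A)$ pairs, terminating at $T(k/2^l)=O^*\big(c^{\,k/2^l}\big)$ with $c$ the base of the polynomial-space subroutine.

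For the time bound I would unroll the recurrence across the $l$ levels, producing the product $\prod_{i=0}^{l-1}\sqrt{N(k/2^i)}$ times the base-case cost $c^{\,k/2^l}$. Each factor $N(k/2^i)$ is dominated by the number of balanced subset choices $\binom{k/2^i}{\,k/2^{i+1}}$, since the polynomially many choices of $A$ contribute only a $\mathrm{poly}(n)$ factor per level that is absorbed into $O^*$; I would bound this binomial using the estimate of Chapter 10, Lemma 7 of \cite{bok:MW}. The resulting exponents form a geometric sum $\sum_{i=0}^{l-1}(k/2^i)$ that telescopes, and after collecting the base-case contribution the whole product collapses to the closed form $O^*\big(e^{(1+g(k,l))k}\big)$, with $g(k,l)$ defined as the $2^{-l}$-weighted deficit between the bound and its deep-recursion value. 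Taking $l\to\lfloor\log k\rfloor$ sends $k/2^l\to O(1)$, so the base case becomes constant-size and $g(k,l)\to 0$, matching the maximal-depth construction underlying Theorem \ref{the:polyspace_q_mst}.

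The space bound is the crux that separates this circuit-model result from the QRAM algorithm of Theorem \ref{the:q_bnpp}: rather than memoising the Dreyfus--Wagner table, the algorithm recomputes each subproblem on demand. At a node of terminal size $\kappa$ the split-index register uses $O(\kappa)$ qubits and the weight register uses $O(\log|W|)$ qubits; the nested minimum-finding calls for the two children are run as ancillary subroutines that are uncomputed after returning their weights, so the live qubits lie along a single root-to-leaf path rather than across all $2^l$ leaves. Summing the geometrically shrinking register sizes gives $\sum_{i=0}^{l-1}O(k/2^i)=O(k)$ index qubits, and with $l\le\lfloor\log k\rfloor$ levels together with the polynomial-space classical leaves the total is $O(\mathrm{poly}(n,m,k))$ qubits and polynomially many classical bits, as required.

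The main obstacle I expect is making the nested quantum minimum finding compose into a single bounded-error unitary while respecting the polynomial-space budget. Dürr--Høyer uses intermediate measurement, so at each of the $l$ levels I would replace it by its coherent, amplitude-amplified variant and boost the per-level success probability to $1-1/\mathrm{poly}$, so that the union bound over the $O(\log k)$ nested levels leaves the overall error $o(1)$ while the $O(\mathrm{polylog})$ repetitions this costs are absorbed into $O^*$. A secondary technical point is handling the connector sets $A$ and the contractions $G/A$ reversibly in place, so that evaluating a candidate split remains a clean, uncomputable oracle that accumulates no garbage which would otherwise break either the space bound or the coherence of the outer search.
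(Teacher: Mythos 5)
Your proposal takes a structurally different route from the paper, and it contains a genuine gap at its innermost step. The paper does \emph{not} nest minimum finding: it classically precomputes the Steiner weights of \emph{all} $\binom{k}{k/2^{l_0}}$ bottom-level terminal subsets using the polynomial-space algorithm of \cite{doi:10.1137/17M1140030}, encodes these weights into a single static superposition of the full multi-level solution space (Dicke-state index registers top-down, quantum adders for the upper-level weights bottom-up, as in \ref{def:search_space}), and then runs \emph{one} global quantum minimum finding over all levels simultaneously; because polynomial space forbids storing the precomputed table, the precomputation and encoding are repeated inside every amplitude-amplification round, giving $T=(T_{cla}+T_{enc})\cdot T_{aa}$ and $g(k,l)=\frac{l+\ln C}{2^l}$. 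You instead run D\"urr--H{\o}yer recursively level by level and evaluate leaves on demand, which, if it worked, would actually yield a stronger bound, roughly $O^*\bigl(2^{k(1-2^{-l})}C^{k/2^l}\bigr)$, since you never enumerate the leaf subsets at all.

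The gap is this: in your recurrence $T(\kappa)=O^*\bigl(\sqrt{N(\kappa)}\,T(\kappa/2)\bigr)$, the leaf instances reached by the innermost search are in \emph{superposition} --- which terminal subset (and which contracted graph $G/A_1/A_2/\cdots$) a leaf carries depends on all ancestor split choices being searched over coherently. A classical algorithm cannot be ``called'' on such an input. In the QRAM model of \cite{10.1007/978-3-030-58150-3_19} this step is exactly what the QRAM lookup of a precomputed table supplies; in the circuit model it is precisely the obstacle that the paper's precompute-and-re-encode-per-round architecture exists to circumvent. To salvage your recurrence you must replace the leaf call by a coherent reversible unitary $\ket{K_{leaf}}\ket{0}\mapsto\ket{K_{leaf}}\ket{W(K_{leaf})}$, e.g.\ by Bennett-style reversible simulation of the exponential-time, polynomial-space classical computation (this keeps $O(\mathrm{poly})$ qubits but degrades the base from $C$ to $C^{1+\epsilon}$); your ``secondary'' reversibility remark covers only the split and contraction bookkeeping, not the leaf solver, so this idea is missing rather than merely implicit. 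A second, smaller issue: with $l=\Theta(\log k)$ nested coherent minimum-finding layers, the per-level error-amplification factors compound multiplicatively (at best $\mathrm{poly}(k)$, potentially $k^{O(\log k)}$), so they cannot simply be ``absorbed into $O^*$'' as you claim --- they must be shown to be $e^{o(k)}$ and folded into $g(k,l)\to 0$, which is true but needs an argument, since the constant-depth composition analysis of \cite{10.1007/978-3-030-58150-3_19} does not directly apply at logarithmic depth.
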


\begin{proof}[Proof \ref{the:q_spacetime_tradeoff_mst}]\label{proof:qmst_poly_space_trade_off}

We begin by specifying the classical pre-computation and encoding complexities. We use theorem 2 from \cite{doi:10.1137/17M1140030} which guarantees the pre-computation steps are poly-space computable.

\begin{theorem*}[Polynomial space exact algorithm for MST]\label{the:poly_space_exp_time_fomin_19}
    There exists a polynomial-space algorithm for Steiner Tree running
in \(O(6.751^{k}n^{O (1)} logW) \) time.
\end{theorem*}

For a DW recursion of \(l_0\) levels with \(2^{l_0}\) branches, the classical pre-computation required to compute the sub-Steiner trees under a certain terminal set cardinality is identical across different branches on the \(l_0\)-th level and scales as the number of qualifying subsets times the time complexity for each. We keep an arbitrary constant \(C\) which can be substituted using different classical algorithms including \cite{doi:10.1137/17M1140030}. Let the maximum number of DW recursion layers be \(l_m = \lfloor{\log_2{k}}\rfloor\), we have the following.

\begin{equation*}\label{precomputation}
    T_{cla}(k,l_0)=\binom{k}{\frac{k}{2^{l_0}}} \times 2^{\log C \cdot \frac{k}{2^{l_0}}}
\end{equation*}

The encoding complexity counted in terms of 2-qubit gate count is given by the product of the number of qualifying subsets with the number of subgraphs in each subset.

\begin{equation*}\label{encoding}
    T_{enc}(k,l_0)=\binom{k}{\frac{k}{2^{l_0}}} \times {2^{l_0}}
\end{equation*}

The rounds of amplitude amplification follow the square root of the support size of the candidate solutions' subspace as a result of the quantum minimum finding process.

\begin{equation*}\label{q_search}
    T_{aa}(k,l_0)=\sqrt{\prod_{l=l_{m}-l_0+1}^{l=l_{m}} \binom{\frac{k}{2^{l-1}}}{\frac{k}{2^l}}}
\end{equation*}

Hence the overall complexity is the product of the sum of pre-computation and encoding complexities with the number of amplitude amplification rounds. We note that the pre-computation has to be repeated along with each round of encoding since polynomial space computations forbid the storage of exponentially large intermediate results. We use the following lemma.

\begin{lemma}[Binomial coefficient bound]\label{lem:binomial_coeff}
    \begin{equation*}
    \sqrt{\frac{n}{8k(n-k)}}  \exp{(n   H(\frac{k}{n}))} \leq {n \choose k} \leq \sqrt{\frac{n}{2 \pi k(n-k)}}  \exp{(n   H(\frac{k}{n}))}
    \end{equation*}
\end{lemma}

The overall complexity is given as below.

\begin{align*}\label{true_overall_complexity}
    T(k, l_0)
    &=(T_{cla}(k, l_0)+T_{enc}(k, l_0))\times T_{aa}(k, l_0)\\
    &= \left( \binom{k}{\frac{k}{2^{l_0}}} \times 2^{\log C \cdot \frac{k}{2^{l_0}}} + \binom{k}{\frac{k}{2^{l_0}}} \times {2^{l_0}} \right) \times \sqrt{\prod_{l=l_{m}-l_0+1}^{l={l_m}} \binom{\frac{k}{2^{l-1}}}{\frac{k}{2^l}}}  \\
    &\leq  \left(  \exp{(kH(\frac{\frac{k}{2^{l_0}}}{k}))}  \times ({2^{\log C \cdot \frac{k}{2^{l_0}}} + {2^{l_0}}})   \right) \times \sqrt{\prod_{l=l_{m}-l_0+1}^{l={l_m}}  \exp{\left( \frac{k}{2^{l-1}} H(\frac{\frac{k}{2^l}}{\frac{k}{2^{l-1}}}) \right)} } \qquad \text{[\ref{lem:binomial_coeff}]}  \\
    &=\left(  \exp{(kH(\frac{1}{2^{l_0}}))}  \times ({2^{\log C \cdot \frac{k}{2^{l_0}}} + {2^{l_0}}})   \right) \times \sqrt{\prod_{l=l_{m}-l_0+1}^{l={l_m}}  \exp{\left( \frac{k}{2^{l-1}} \right)}} \\
    &\leq \left(  \exp{(k\frac{\log {\frac{2^{l_0}}{e}}}{2^{l_0}})}  \times ({2^{\log C \cdot \frac{k}{2^{l_0}}} + {2^{l_0}}})   \right) \times   \exp{\left(\sum_{l=l_{m}-l_0+1}^{l={l_m}} \frac{k}{2^{l}} \right)} \qquad \text{[\( H(p) \leq p \log \frac{e}{ p} \)]}\\
    &=\left( \exp (k \cdot \frac{(l_0- \ln 2+2^{l_0}-1)}{2^{l_0}} ) \times (e^{\ln C \frac{k}{2^{l_0}}} + 2^{l_0}) \right)\\
    &=O^*(\exp (k (1+ \frac{l_0 + \ln C}{2^{l_0}}) ) )\\
\end{align*}

Therefore, \(g(k,l)=\frac{l+\ln C}{2^l}\) for \( 1\leq l \leq \lfloor \log k\rfloor, l\in \mathbb{N}\).

\end{proof}

The proof \ref{sec:proof_mst_poly_space_q_exact} for \ref{the:polyspace_q_mst} follows the above proof for \ref{the:q_spacetime_tradeoff_mst} with a simpler and tighter upper bounding enabled when substituting \(l_0=l_m\). We note that for the more realistic case of a likelihood based phylogenetic inference problem with real number weighted edges, this theorem \ref{the:polyspace_q_mst} has the potential to provide a speedup in the circuit model.
\section{Conclusions and Future Directions}

We present a quantum algorithm for the BNPP problem with polynomially better complexity bounds for the exponential parts in the best classical algorithm in the QRAM model. We also present a polynomial space quantum algorithm for MST in the circuit model. A future direction to obtain a QRAM-free quantum speedup in the setting of NPP is in approximating the weighted parsimony version of NPP, potentially building on \cite{10.1109/TCBB.2008.13} and \cite{doi:10.1137/1.9781611977073.128}. The general edge weight case warrants future investigation as the polynomial space QRAM-free quantum algorithm outperforms all classical counterparts.
The main contribution of the algorithms is applying Grover's search algorithm to the divide-and-conquer procedure and using efficient Dicke state preparation to span the space of subgraphs. The complexity improvements will allow for calculating NPP for larger input sets in the presence of a fully fault-tolerant quantum computer. 

\section{Acknowledgements}
A.H.W. thanks the VILLUM FONDEN for its support with a Villum Young Investigator (Plus) Grant  (Grant No. 25452 and Grant No. 60842) as well as via the QMATH Centre of Excellence (Grant No. 10059). A.H.W. acknowledges support
from the Novo Nordisk Foundation (GrantNo. NNF20OC0059939 ‘Quantum for Life’). S.B. acknowledges support from the Danish National Research Foundation via a chair grant (DNRF160) and the Novo Nordisk Foundation via The Novo Nordisk Young Investigator Award (NNF20OC0059309). S.B. acknowledges support from the Eric and Wendy Schmidt Fund For Strategic Innovation via the Schmidt Polymath Award (G-22-63345) which also supports L.M. In addition, L.M. thanks Dr. Joel Wertheim and Dr. Marc A. Suchard for their helpful discussions.

\printbibliography

\appendix

\section{Asymptotic growth of $q(n,m)$}\label{sec:q_asymptotic_growth}
We present a toy model below to demonstrate the growth of $q$ as a linear function of $mn$.

Several population genetics model exist, including the Wright-Fisher model \cite{Wright1955-gc} and the coalescent model \cite{Kingman1982-se}. A fundamental assumption underpinning these models is an infinite number of sites, an assumption that is difficult in modeling homoplasy events. Here, we do not employ any population genetics models. We will consider a binary phylogenetic tree, where the taxa have m binary independent characters and a Jukes-Cantor substitution model \cite{JUKES196921}. 


For a binary tree with $d$ levels of offspring from a single root, the total number of offspring is $s=2^{d+1}-2$. Let $n=2^d$ be the number of offspring at depth $d$, which implies $s=2n-2$. Consider a simplistic mutation rule where the child state space and parent state space are related by a binary symmetric channel of crossover probability $p\ll1$. We count the number of m-homoplasm events without back mutations.

Consider $A^m_n$, the number of ways of choosing m non-empty compartments, each consecutive within itself, from a list of n indistinguishable elements.

It is easy to see the following recursion formula. The first term on the right is simple. Every term in the second summation on the right represents the scenario where we select \(n-r\) elements to the right as the rightest compartment and pick \(m-1\) compartments from the remaining elements.

$$A^m_n = A^m_{n-1} + \sum^{r=n-1}_{r = m-1}A^{m-1}_r $$

One can compute the expected number of homoplasm events on a single site as follows.

$$q_{ave}{(n,p)}=\mathbb{E}_{\textnormal{mutation rate} =  p}[q] = \sum_{m=1}^{m=n}(1-p)^{s-m}(p)^{m}A^{m}_{n}$$

Notice that $p\ll 1$. We can estimate $q_{ave}{(n,p)}$ via the leading terms in $\sum_{m=1}^{m=n}(p)^{m}A^{m}_{n}$.

By the above analysis, we see that the number of homoplasm events is proportional to the number of nodes of the phylogenetic tree leading to the observed set of taxa. The set of evolutionary histories we are looking at is the full set of possible evolutionary history (essentially binomial distribution) excluding the set of histories with back mutation events.

\section{Dreyfus-Wagner Recursion}\label{sec:dw_recursion}
We recall the DW recursion and explicitly lay out the 3 levels of DW recursion from \Romannum{1} to \Romannum{3}. Subscripts denote level number. Superscripts denote the order of appearance of definitions within the same subscript level and the same type of variable. We rename the variables from a previous presentation \cite{10.1007/978-3-030-58150-3_19} to make clearer the naming and avoid cross-level and cross-branch confusions. In the weight state preparation step, the tree is constructed bottom up and two branches that share the same variable here are meant to be tested for the relevant equality. The previous definition assumes that variable definition is only inherited by child formulas (formulas that contain the definition in question) which breaks down when considering different branches simultaneously. We need to consider different branches simultaneously when spanning the search space to obtain a quantum speed-up.
\begin{equation*}\label{def:dw}
W_G=
\min _ {\substack{ K_1 \subseteq  K \\ |K_1| =  (\alpha \pm \eta) }}
\min _ {\substack{ A \subseteq  V \\ |A| \leq  \lceil \log 1/\epsilon \rceil }}
W_{G}(K_1 \cup A) + W_{G/A}(K_2 \cup \{v_A\}) 
\end{equation*}

level \Romannum{1}

\begin{equation*}\label{def:l1}
W_G (K)=
\min _ {\substack{ K_1^1 \subseteq  K \\ |K_1^1| =  (\frac{1}{2} \pm \epsilon)k }}
\min _ {\substack{ A_1^1 \subseteq  V \\ |A_1^1| \leq  \lceil \log 1/\epsilon \rceil }}
W_{G}(K_1^1 \cup A_1^1) + W_{G/{A_1^1}}(K_1^2 \cup \{v_{A_1^1}\})
\end{equation*}
\[K_1^2 = K \setminus (K_1^1 \cup A_1^1)\]

level \Romannum{2}

\begin{equation*}\label{def:l2_b1}
W_G(K_1^1 \cup A_1^1)=
\min _ {\substack{ K_2^1 \subseteq  {K_1^1 \cup A_1^1} \\ |K_2^1| =  (\frac{1}{4} \pm O(\epsilon))k }}
\min _ {\substack{ A_2^1 \subseteq  V \\ |A_2^1| \leq  \lceil \log 1/\epsilon \rceil }}
W_{G}(K_2^1 \cup A_2^1) + W_{G/{A_2^1}}(K_2^2 \cup \{v_{A_2^1}\}) 
\end{equation*}

\[K_2^2 = ({K_1^1 \cup A_1^1}) \setminus ({K_2^1 \cup A_2^1})\]

\begin{equation*}\label{def:l2_b2}
W_{G/{A_1^1}}(K_1^2 \cup \{v_{A_1^1}\})=
\min _ {\substack{ K_2^3 \subseteq  {K_1^2 \cup \{v_{A_1^1}\}} \\ |K_2^3| =  (\frac{1}{4} \pm O(\epsilon)) }}
\min _ {\substack{ A_2^2 \subseteq  V \\ |A_2^2| \leq  \lceil \log 1/\epsilon \rceil }}
W_{G/{A_1^1}}(K_2^3 \cup A_2^2) + W_{(G/{A_1^1)}/{A_2^2}}(K_2^4 \cup \{v_{A_2^2}\}) 
\end{equation*}

\[K_2^4 = ({K_1^2 \cup \{v_{A_1^1}\}}) \setminus ({K_2^3 \cup A_2^2})\]

level \Romannum{3}

\begin{equation*}\label{def:l3_b1}
W_{G}(K_2^1 \cup A_2^1)=
\min _ {\substack{ K_3^1 \subseteq  {K_2^1 \cup A_2^1} \\ |K_3^1| =  (\frac{1}{4} \pm O(\epsilon))k }}
\min _ {\substack{ A_3^1 \subseteq  V \\ |A_3^1| \leq  \lceil \log 1/\epsilon \rceil }}
W_{G}(K_3^1 \cup A_3^1) + W_{G/{A_3^1}}(K_3^2 \cup \{v_{A_3^1}\}) 
\end{equation*}

\[K_3^2 = ({K_2^1 \cup A_2^1}) \setminus ({K_3^1 \cup A_3^1})\]

\begin{equation*}\label{def:l3_b2}
W_{G/{A_2^1}}(K_2^2 \cup \{v_{A_2^1}\})=
\min _ {\substack{ K_3^3 \subseteq  {K_2^2 \cup \{v_{A_2^1}\}} \\ |K_3^3| =  (\frac{1}{4} \pm O(\epsilon))k }}
\min _ {\substack{ A_3^2 \subseteq  V \\ |A_3^2| \leq  \lceil \log 1/\epsilon \rceil }}
W_{G/{A_2^1}}(K_3^3 \cup A_2^1) + W_{(G/{A_3^2})/{A_3^2}}(K_3^4 \cup \{v_{A_3^2}\})
\end{equation*}

\[K_3^4 = ({K_2^2 \cup \{v_{A_2^1}\}}) \setminus ({K_3^3 \cup A_3^2})\]

\begin{equation*}\label{def:l3_b3}
W_{G/{A_1^1}}(K_2^3 \cup A_2^2)=
\min _ {\substack{ K_3^5 \subseteq  {K_2^3 \cup A_2^2} \\ |K_3^5| =  (\frac{1}{4} \pm O(\epsilon))k }}
\min _ {\substack{ A_3^3 \subseteq  V \\ |A_3^3| \leq  \lceil \log 1/\epsilon \rceil }}
W_{G/{A_1^1}}(K_3^5 \cup A_3^3) + W_{(G/{A_1^1})/{A_3^3}}(K_3^6 \cup \{v_{A_3^3}\}) 
\end{equation*}

\[K_3^6 = ({K_2^3 \cup A_2^2}) \setminus ({K_3^5 \cup A_3^3})\]

\begin{equation*}\label{def:l3_b4}
W_{(G/{A_1^1})/{A_2^2}}(K_2^4 \cup \{v_{A_2^1}\})=
\min _ {\substack{ K_3^7 \subseteq  {K_2^4 \cup \{v_{A_2^1}\}} \\ |K_3^7| =  (\frac{1}{4} \pm O(\epsilon))k }}
\min _ {\substack{ A_3^4 \subseteq  V \\ |A_3^4| \leq  \lceil \log 1/\epsilon \rceil }}
W_{(G/{A_1^1})/{A_2^2}}(K_3^7 \cup A_3^4) + W_{({(G/{A_1^1})/{A_2^2}})/{A_3^4}}(K_3^8 \cup \{v_{A_3^4}\}) 
\end{equation*}

\[K_3^8 = ({K_2^4 \cup \{v_{A_2^1}\}}) \setminus ({K_3^7 \cup A_3^4})\]

\section{Proof of plugin lemma}\label{sec:plug_in}
In this section we present the full proof for the plugin lemma \ref{lem:mst_plugin_bnpp} for the BNPP problem using a MST subroutine. The framework is adopted from lemma 4.10 in \cite{sridhar_algorithms_2007}.

We will first prove the following lemma \ref{lem:steiner}.

\begin{lemma}[modified lemma 4.7 of \cite{sridhar_algorithms_2007}]\label{lem:steiner}
For a set of taxa M, if the number of nonisolated vertices of the associated conflict graph is t, then an optimum phylogeny $T^*_M$ can be constructed in time $O(C^{(s+t)} 2^t + nm^2)$, where \(s=penalty(M)\).
\end{lemma}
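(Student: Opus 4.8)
The plan is to reduce the construction of $T^*_M$ to a single minimum Steiner tree computation on the conflicting characters, while handling the remaining characters by a classical perfect-phylogeny routine. First I would partition the $m$ characters according to the conflict graph of Definition \ref{def:conflict_graph}: the $t$ non-isolated columns are the \emph{conflict characters}, and the remaining $m-t$ isolated columns are pairwise compatible (no four-gamete violation), hence \emph{perfect}. The key structural claim is that an optimal phylogeny decomposes along this partition — restricting $T^*_M$ to the perfect characters yields a perfect phylogeny in which each such character mutates exactly once, so these characters contribute $0$ to $\mathrm{penalty}(T^*_M)$ and can be reinserted afterwards by Gusfield's linear-time algorithm without increasing the penalty.

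Next I would set up the Steiner instance on the conflict characters. Projecting the taxa of $M$ onto the $t$ conflict columns and discarding duplicates gives a terminal set living on the $t$-dimensional hypercube $G$ with $2^t$ vertices and unit-weight Hamming-adjacency edges. By the equivalence of Definitions \ref{prob:binary_near_perfect_phylogeny} and \ref{prob:mst_on_hypercube}, an optimal near-perfect phylogeny on the conflict characters is exactly a minimum Steiner tree $W_G$ connecting these terminals. Counting mutations character-by-character, $\mathrm{penalty}(T) = \sum_c (\mathrm{mut}_c - 1)$; since the perfect characters each contribute $0$, the total mutation count carried by the conflict characters equals $s+t$, which is precisely the weight of this minimum Steiner tree.

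The terminal-count bound is what controls the running time. A tree connecting $k$ distinct terminals in a unit-weight graph has at least $k-1$ edges, so $k-1 \le W_G = s+t$, giving $k \le s+t+1$. Feeding this into the assumed MST subroutine running in $O^*(C^k)$ yields $O^*(C^{s+t})$; the explicit $2^t$ factor records the size of the hypercube $G$ on which the subroutine runs, since its vertex set has cardinality $2^t$, so the conflict part costs $O(C^{s+t} 2^t)$. The preprocessing — computing the conflict graph by four-gamete tests over all column pairs and projecting the taxa — costs $O(nm^2)$, and Gusfield's reinsertion of the perfect characters is subsumed in this term; summing gives the claimed $O(C^{s+t} 2^t + nm^2)$.

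I expect the main obstacle to be the decomposition step in the first paragraph: rigorously proving that an optimum near-perfect phylogeny restricted to the perfect (compatible) characters is itself perfect, and that the Steiner solution on the conflict characters can be merged with the Gusfield assignment on the perfect characters into a single valid phylogeny of the same penalty. This requires the four-gamete/compatibility theory to guarantee that the two sub-solutions are consistent at shared internal (Steiner) vertices, so that no extra mutations are forced when the character sets are recombined. Once this is in place, the terminal-count bound and the running-time accounting are routine.
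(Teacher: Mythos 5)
Your proposal takes essentially the same route as the paper's proof: decompose the characters via the conflict graph, solve minimum Steiner tree instances on the conflicting characters using the $O^*(C^k)$ subroutine with terminals bounded by $s+t+1$ and Steiner space bounded by $2^t$, and reinsert the conflict-free characters with Gusfield's linear-time algorithm, for a total of $O(C^{s+t}2^t + nm^2)$. The only (immaterial) difference is that the paper solves one Steiner instance per connected component of the conflict graph while you solve a single instance over all $t$ conflict columns, and the paper is just as brief as you are about the structural merging step you flag as the main obstacle.
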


\begin{proof}
    We first construct the conflict graph and identify the nontrivial connected components of itin time \(O(nm^2)\). Let \(\kappa_i\) be the set of characters associated with component \(i\). We compute the MSt \(T_i\) for character set \(\kappa_i\). The remaining conflict-free characters in \(C \setminus \cup_i \kappa_i\) can be added by contracting each \(T_i\) to vertices and solving the perfect phylogeny problem using Gusfield's linear time algorithm.

    Since \(penalty(M)=s\), there are at most \(s+t=1\) distinct bit strings defined over character set \(\cup_i \kappa_i\). The Steiner space is bounded by \(2^t\) since \(|\cup_i \kappa_i|=t\). Using a MST subroutine running in time \(O(C^k)\), the total runtime for solving all Steiner tree instances is \( O(C^{(s+t)}2^t+nm^2)\)
\end{proof}

We can now give the full proof to lemma \ref{lem:mst_plugin_bnpp}.
\begin{proof}
    
We can declare each character to be in either a "marked" or unmarked state. At the beginning of the algorithm \ref{alg:classical_bnpp}, all of the characters are "unmarked". As the algorithm proceeds, we will mark characters to indicate that the algorithm has identified them as mutating more than once in the reconstracted tree \(T^*\).

We will examine two parameters, \(\rho \) and \(\gamma\), which specify the progress made by the derandomized algorithm in either identifying mulitple-mutating characters or reducing the problem to subproblems of lower total penalty. Consider the set of characters \(S\) such that, for all \(c \in S\), character \(c\)
is unmarked and there exists matrix \(M_i\) such that \(c\) mutates more than once in \(T^*_{M_i}\), the tree constructed from \(M_i\). We let \(\rho=|S|\). Parameter \(\rho\) refers to the number of characters mutating more than once that have not yet been identified. Parameter \(\gamma\) denotes the sum of the penalties of the remaining matrices \(M_i\), \(\gamma=\sum_i penalty(M_i)\).

Consider step 2a in algorithm \ref{alg:classical_bnpp}, when the algorithm selects character \(c(v)\). After selecting \(c(v)\), the algorithm proceeds to explore both cases when \(c(v)\) either mutates once or multiple times in \(T^*_{M_j}\). In the first case, \(penalty(T^*_{M_j})\) decreases by at least 1. Therefore \(\gamma\) decreases by at least 1. In the second case, the algorithm identifies a multiple mutant. We now mark character \(c(v)\), which reduces \(\rho\) by 1.
If the main loop at Step 3 terminates, then the algorithm finds optimal Steiner trees through a MST subroutine with base C and the runtime is bounded by \((2C^2)^\gamma\)  using lemma \ref{lem:steiner}. Let \(\alpha=(2C^2)^\gamma\). Therefore, the runtime of this portion of our algorithm can be expressed as the equation below.

\begin{align}
    T(\gamma,\rho) &\leq \max\{\alpha^\gamma, T(\gamma-1,\rho)+T(\gamma, \rho-1)+1\}
\end{align}

This quantity can be bounded by \(\alpha^{\gamma+1}(\frac{\alpha+1}{\alpha-1})^{\rho+1}\) which we can verify by induction.

\begin{align*}
T(\gamma,\rho) &\leq \max\{\alpha^\gamma, \alpha^{\gamma}(\frac{\alpha+1}{\alpha-1})^{\rho+1}+\alpha^{\gamma+1}(\frac{\alpha+1}{\alpha-1})^{\rho}+1\}\\
&=\max\{\alpha^\gamma,\alpha^\gamma (\frac{\alpha+1}{\alpha-1})^\rho(\frac{\alpha+1}{\alpha-1}+\alpha)+1\}\\
&=\max\{\alpha^\gamma,\alpha^\gamma (\frac{\alpha+1}{\alpha-1})^\rho(\frac{\alpha+1}{\alpha-1}+\alpha)+1\}\\
&\leq \max\{\alpha^\gamma,\alpha^\gamma (\frac{\alpha+1}{\alpha-1})^\rho(\frac{\alpha+1}{\alpha-1}+\alpha+1)\}\\
&=\alpha^{\gamma+1}(\frac{\alpha+1}{\alpha-1})^{\rho+1}
\end{align*}

Since we know that \(\gamma \leq q\) and \(\rho\leq q\), we can bound \(T(q,q)=O(\alpha^{q}(\frac{\alpha+1}{\alpha-1})^{q})=O^*((2C^2\frac{2C^2+1}{2C^2-1})^q)\). This yields an overall complexity of \(O^*((2C^2\frac{2C^2+1}{2C^2-1})^q+8^qnm^2)\) where the second term is the non-leading term from other analysis in \cite{sridhar_algorithms_2007}.

\end{proof}

\section{Proof for poly-space quantum algorithm}\label{sec:proof_mst_poly_space_q_exact}
We present the proof for theorem \ref{the:polyspace_q_mst} below.

\begin{proof} [Proof \ref{the:polyspace_q_mst}]\label{proof:qmst_poly_space}
    Recall the overall complexity from the proof for theorem \ref{the:q_spacetime_tradeoff_mst}. The only difference is that we let \(l_0=l_m\).
    \begin{align*}\label{true_overall_complexity}
    T(k, l_0)
    &=(T_{cla}(k, l_0)+T_{enc}(k, l_0))\times T_{aa}(k, l_0)\\
    &= \left( \binom{k}{\frac{k}{2^{l_0}}} \times 2^{\log C \cdot \frac{k}{2^{l_0}}} + \binom{k}{\frac{k}{2^{l_0}}} \times {2^{l_0}} \right) \times \sqrt{\prod_{l=l_{m}-l_0+1}^{l={l_m}} \binom{\frac{k}{2^{l-1}}}{\frac{k}{2^l}}}  \\
    &= \left( \binom{k}{\frac{k}{2^{l_m}}} \times 2^{\log C \cdot \frac{k}{2^{l_m}}} + \binom{k}{\frac{k}{2^{l_m}}} \times {2^{l_m}} \right) \times \sqrt{\prod_{l=1}^{l={l_m}} \binom{\frac{k}{2^{l-1}}}{\frac{k}{2^l}}} \qquad \text{[\(l_0=l_m=\lfloor \log k\rfloor\)]} \\
    &= \left( \binom{k}{1} \times 2^{\log C } + \binom{k}{1} \times {k} \right) \times \sqrt{\prod_{l=1}^{l={l_m}} 2^{\frac{k}{2^{l-1}}-1}} \qquad \text{[w.l.o.g., \(\lfloor \log k\rfloor=\log k\)]} \\
    &= \left( Ck + k^2 \right) \times \sqrt{\prod_{l=1}^{l={l_m}} 2^{\frac{k}{2^{l-1}}-1}} \\
    &= \left( Ck + k^2 \right) \times  2^{(k\sum_{l=1}^{l={l_m}}\frac{1}{2^{l}})-(\frac{l_m}{2})} \\
    &= \left( Ck + k^2 \right) \times  2^{k -1} \times \frac{1}{k^{\frac{1}{2}}} \\
    &= O^*(2^k)
    \end{align*}   
\end{proof}

\end{document}